\newcommand{\ux}{\underline{x}}
\newcommand{\ox}{\overline{x}}
\global\long\def\X{\mathbb{X}}%
\global\long\def\U{\mathbb{U}}%
\renewcommand{\epsilon}{\varepsilon}
\renewcommand{\tilde}{\widetilde}
\renewcommand{\bar}{\overline}
\def\:#1{\protect \ifmmode {\mathbf{#1}} \else {\textbf{#1}} \fi}
\newcommand{\cA}{\mathcal{A}}
\newcommand{\cB}{\mathcal{B}}
\newcommand{\cD}{\mathcal{D}}
\newcommand{\cL}{\mathcal{L}}
\newcommand{\cU}{\mathcal{U}}
\renewcommand{\epsilon}{\varepsilon}
\newcommand{\transp}{\top}
\DeclarePairedDelimiterX{\inp}[2]{\langle}{\rangle}{#1, #2}
\newcommand{\Real}{\mathbb{R}}
\newcommand{\Natural}{\mathbb{N}}
\newtheorem{theorem}{Theorem}
\newtheorem{corollary}{Corollary}
\newtheorem{proposition}{Proposition}
\newtheorem{assumption}{Assumption}
\title{\LARGE \bf
	Robust-Adaptive Interval Predictive Control \\for Linear Uncertain Systems
}
\author{Edouard Leurent and Denis Efimov and Odalric-Ambrym Maillard
	\thanks{Edouard Leurent is with Renault Group, Paris, France.}%
	\thanks{Denis Efimov and Odalric-Ambrym Maillard are with Inria Valse and Inria SequeL respectively, Inria Lille Nord-Europe, France.}%
}
\begin{document}

\maketitle
\thispagestyle{empty}
\pagestyle{empty}

\begin{abstract}
We consider the problem of stabilization of a linear system, under
state and control constraints, and subject to bounded disturbances
and unknown parameters in the state matrix. First, using a simple
least square solution and available noisy measurements, the set of
admissible values for parameters is evaluated. Second, for the estimated
set of parameter values and the corresponding linear interval model
of the system, two interval predictors are recalled and an unconstrained
stabilizing control is designed that uses the predicted intervals.
Third, to guarantee the robust constraint satisfaction, a model predictive
control algorithm is developed, which is based on solution of an optimization
problem posed for the interval predictor. The conditions for recursive
feasibility and asymptotic performance are established. Efficiency
of the proposed control framework is illustrated by numeric simulations.
\end{abstract}

\section{Introduction}

There are plenty of real-world control problems for dynamical systems,
which face a severe model uncertainty (that can be represented by
unknown parameters and exogenous disturbances), under strict state
and control constraints, whose maintaining is critical and related
with the system safety (\emph{e.g}., path/trajectory planning for
autonomous cars and robots \cite{leurent2019interval,Lenz2015}).
The most popular approaches aiming to solve these complex regulation
issues are the methods based on reinforcement (deep) learning \cite{mnih2015humanlevel} 
or the Model Predictive Control (MPC) algorithms \cite{Basar1996,MPC1,MPC:Tube2},
which are more common in industrial applications (there are also techniques
relying on both frameworks as in \cite{Lenz2015,Dean2018}). The advantages
of these tools consist in the ability to provide a robust constraint
satisfaction, while ensuring optimization of a selected cost. A shortage
is their increased (online) computational complexity, which becomes
less important with growing numeric capacities of smart sensors and
actuators that are omnipresent nowadays.

Focusing on the latter group of approaches, the adaptive MPC is a
common framework to counteract the influence of uncertain parameters
\cite{Fukushima2007,Adetola2009,Adetola2011,Aswani2013,Vicente2019},
which includes, first, an estimation/adaptation algorithm to evaluate
the model uncertainty, and second, an MPC algorithm that helps to
keep the constraints during transients. If the constraint satisfaction
is predominant for system's safety, the complicacy comes from the
evaluation of all possible plant trajectories in the presence of external
perturbations and noises. 

In the present work, our goal is to develop a simple solution that
allows the system comportment to be evaluated for given model uncertainty
with adjustments provided by estimation algorithms. Such a solution
is based on interval predictors proposed recently \cite{Efimov2012,leurent2019interval},
whose use will provide to a slightly nonlinear MPC algorithm an ability
to keep the constraints in all admissible scenarios.

The outline of this work is as follows. The detailed problem statement
is given in Section \ref{sec:Problem}. We consider a continuous-time
linear system with uncertainty presented by state and output disturbances,
as well as a vector of unknown parameters belonging to a given compact
set; and our goal is the state stabilization while satisfying the
state and the control restrictions (due to presence of disturbances
the input-to-state stability concept is used). A simple parameter
estimation algorithm with evaluation of its accuracy is discussed
in \ref{sec:estimation} (this part is rather standard and does
not constitute the main novelty). The obtained set of admissible values
for the uncertain parameters allows an interval model to be obtained
for the considered system. Two interval predictors are introduced
in Section \ref{sec:prediction}, together with an unconstrained robust
stabilizing control that uses the interval predictor variables only.
In Section \ref{sec:control}, an MPC algorithm based on the designed interval predictors is developed, and analyzed in our main result. Applicability
of the approach is demonstrated on lane-keeping application for a self-driving car in Section \ref{sec:experiments}.

\section*{Notation}

Denote $[n]=\{1,2,\dots,n\}$ for any $n\in\Natural$. Euclidean norm is denoted
as $|\cdot|$, and $L_{\infty}$ norm on $[t_{0},t_{1})$ as $\Vert\cdot\Vert_{[t_{0},t_{1})}$.
We denote as $\mathcal{L}_{\infty}^{m}$ the set of all inputs
$u:\mathbb{R}_{+}\to\mathbb{R}^{m}$ with the property $\Vert u\Vert_{[0,+\infty]}<\infty$.
Given a matrix $A\in\Real^{n\times n}$, denote $A^{+}=\max\{A,0\}$,
$A^{-}=A^{+}-A$, $|A|=A^{+}+A^{-}$. 
For two vectors $x_{1},x_{2}\in\mathbb{R}^{n}$ or matrices $A_{1},A_{2}\in\Real^{n\times n}$,
the relations $x_{1}\le x_{2}$ and $A_{1}\le A_{2}$ are understood
elementwise. 
The relation $P\prec0$ ($P\succeq0$) means that a symmetric matrix
$P\in\Real^{n\times n}$ is negative (positive semi) definite.

\section{\label{sec:Problem} Problem Statement}

We consider a linear system: 
\begin{equation}
\dot{x}(t)=A(\theta)x(t)+Bu(t)+D\omega(t),\;t\geq0,\label{eq:dynamics}
\end{equation}
where $x(t)\in\Real^{p}$ is the state, $u(t)\in\Real^{q}$ is the
control and $\omega(t)\in\Real^{r}$ is the state perturbation, $\omega\in\cL_{\infty}^{r}$;
it is assumed that the constant uncertain parameter vector $\theta\in\Real^{d}$
in the state matrix $A:\Real^{d}\to\Real^{p\times p}$ belongs to
a compact set $\Theta\subset\Real^{d}$; the control matrix $B\in\Real^{p\times q}$
and disturbance matrix $D\in\Real^{p\times r}$ are known. We also
assume that the noisy observations of $x(t)$ and and $\dot{x}(t)$
are available:
\begin{gather}
y_{1}(t)=x(t)+\nu_{1}(t),\;y_{2}(t)=\dot{x}(t)+\nu_{2}(t),\label{eq:outputs}
\end{gather}
where $\nu(t)=[\nu_{1}^{\top}(t)\;\nu_{2}^{\top}(t)]^{\top}\in\Real^{2p}$
is the measurement disturbance, $\nu\in\cL_{\infty}^{2p}$. Roughly
speaking, we assume with \eqref{eq:outputs} that the state $x(t)$ and its derivative $\dot{x}(t)$
are estimated using some observation/differentiation approaches \cite{Besancon2007,Reichhartinger2018}.

\begin{assumption}
\label{assu:main} There exist signals $\underline{\omega},\overline{\omega}\in\cL_{\infty}^{r}$,
$\underline{\nu},\overline{\nu}\in\cL_{\infty}^{2p}$ and two vectors
$\underline{x}_{0},\overline{x}_{0}\in\Real^{p}$ such that
\begin{gather*}
\underline{\omega}(t)\leq\omega(t)\leq\overline{\omega}(t),\;\underline{\nu}(t)\leq\nu(t)\leq\overline{\nu}(t)\quad\forall t\geq0,\\
\underline{x}_{0}\leq x(0)\leq\overline{x}_{0}.
\end{gather*}
\end{assumption}

\subsection{Problem}

Our goal is to design a robust control that stabilizes \eqref{eq:dynamics},
\eqref{eq:outputs} at a vicinity of the origin under Assumption \ref{assu:main}
such that
\begin{equation}
x(t)\in\X,\;u(t)\in\U\quad\forall t\geq0,\label{eq:constraints}
\end{equation}
where $[\underline{x}_{0},\overline{x}_{0}]\subset\X\subset\Real^{p}$
and $\U\subset\Real^{q}$ are given bounded constraint sets for the
state and the control, respectively.

\section{\label{sec:estimation} Model Estimation}

To derive a confidence region $\hat{\Theta}(t)\subseteq\Theta$ for
the uncertain parameters $\theta$, the structure of
$A(\theta)$ must be specified:
\begin{assumption}
\label{assu:structure} There exist known matrices $A,\phi_{1},\dots,\phi_{d}\in\Real^{p\times p}$
such that for all $\theta\in\Theta$,
\[
A(\theta)=A+\sum_{i=1}^{d}\theta_{i}\phi_{i}.
\]

\end{assumption}
\eqref{eq:dynamics}, \eqref{eq:outputs} and \Cref{assu:structure}
yield the linear regression:
\begin{equation}
y(t)=\Phi(t)\theta+\eta(t),\label{eq:lin_regr}
\end{equation}
where $y(t)=y_{2}(t)-Ay_{1}(t)-Bu(t)$ and $\Phi(t)=[\phi_{1}y_{1}(t)\dots\phi_{d}y_{1}(t)]\in\Real^{p\times d}$
are known signals, and
\begin{align*}
\eta(t) & =D\omega(t)-\left(A+\textstyle\sum_{i=1}^{d}\theta_{i}\phi_{i}\right)\nu_{1}(t)+\nu_{2}(t)
\end{align*}
is the combined perturbation, which is from $\cL_{\infty}^{p}$ under
assumptions \ref{assu:main} and \ref{assu:structure}:
$
\Vert\eta\Vert_{\infty}\leq\bar{\eta}.
$

We need a hypothesis on the level of excitation of $\Phi$ \cite{Morgan1977}:
\begin{assumption}
\label{assu:PE} There exist $\ell>0$ and $\vartheta>0$ such that
the matrix function $\Phi:\mathbb{R}_{+}\to\mathbb{R}^{p\times d}$
satisfies $(\ell,\vartheta)$--Persistence of Excitation (PE) condition: for any $t\in\mathbb{R_{+}}$,
\[
\int_{t}^{t+\ell}\Phi^{\top}(s)\Phi(s)ds\ge\vartheta I_{d}.
\]

\end{assumption}
Under Assumption \ref{assu:PE}, for any $g\geq1$ and $g\ell<T\leq(g+1)\ell$,
\begin{align*}
\int_{t}^{t+T}\Phi^{\top}\Phi(s)ds = & \left(\int_{t+g\ell}^{t+T}+ \sum_{i=0}^{g-1}\int_{t+i\ell}^{t+(i+1)\ell}\right)\Phi^{\top}\Phi(s)ds\\
\geq & \frac{g}{g+1}\frac{\vartheta}{\ell}TI_{d}\geq\frac{\vartheta}{2\ell}TI_{d},
\end{align*}
\emph{i.e}., the matrix function $\int_{t}^{t+T}\Phi^{\top}(s)\Phi(s)ds$
is nonsingular.
Taking into account this observation, in order to solve \eqref{eq:lin_regr}
and obtain $\hat{\Theta}(t)$, we use next the simplest least
square estimation:
\begin{equation}
\hat{\theta}(t)=\begin{cases}
\left(\int_{0}^{t}\Phi^{\top}(s)\Phi(s)ds\right)^{-1}\int_{0}^{t}\Phi^{\top}(s)y(s)ds & t\geq\ell\\
\theta_{0} & t\in[0,\ell)
\end{cases},\label{eq:LS}
\end{equation}
where $\hat{\theta}(t)\in\Real^{d}$ is an estimate of $\theta$ and
$\theta_{0}\in\Theta$ is an initial estimate. 
The estimation error of \eqref{eq:LS} can be evaluated as: 
\begin{proposition}
\label{prop:LS} Let assumptions \ref{assu:main}, \ref{assu:structure}
and \ref{assu:PE} be satisfied. Then, for all $t\geq\ell$:
$
|\hat{\theta}(t)-\theta|\leq\Delta\theta(\Vert x\Vert_{\infty}),\text{ where}
$
\[
\Delta\theta(\Vert x\Vert_{\infty})=\frac{2\ell}{\vartheta}\max_{i\in[d]}\Vert\phi_{i}\Vert_{2}(\Vert x\Vert_{\infty}+\max\{\Vert\underline{\nu}_{1}\Vert_{\infty},\Vert\overline{\nu}_{1}\Vert_{\infty}\})\bar{\eta}.
\]
\end{proposition}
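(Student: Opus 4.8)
The plan is to substitute the regression model \eqref{eq:lin_regr} into the least-square estimator \eqref{eq:LS} and isolate the contribution of the residual $\eta$. Writing $M(t)=\int_{0}^{t}\Phi^{\top}(s)\Phi(s)\,ds$ for the accumulated information matrix and using $y(s)=\Phi(s)\theta+\eta(s)$, we obtain $\int_{0}^{t}\Phi^{\top}(s)y(s)\,ds=M(t)\,\theta+\int_{0}^{t}\Phi^{\top}(s)\eta(s)\,ds$. Since the computation preceding \eqref{eq:LS} shows that $M(t)$ is invertible for $t\geq\ell$, multiplying by $M(t)^{-1}$ yields the error identity
\begin{equation*}
\hat\theta(t)-\theta=M(t)^{-1}\int_{0}^{t}\Phi^{\top}(s)\eta(s)\,ds.
\end{equation*}
This decomposition is the heart of the argument: the estimation error is the inverse of the accumulated information applied to the accumulated noise correlation.

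I would then bound the two factors separately. The same computation that establishes invertibility in fact gives $M(t)\succeq\frac{\vartheta}{2\ell}\,t\,I_{d}$ for $t\geq\ell$, so every eigenvalue of the symmetric matrix $M(t)$ is at least $\frac{\vartheta}{2\ell}t$ and hence $\|M(t)^{-1}\|_{2}\leq\frac{2\ell}{\vartheta t}$. Passing the Euclidean norm inside the integral and invoking $\|\eta\|_{\infty}\leq\bar{\eta}$ from assumptions \ref{assu:main} and \ref{assu:structure},
\begin{equation*}
\left|\int_{0}^{t}\Phi^{\top}(s)\eta(s)\,ds\right|\leq\int_{0}^{t}\|\Phi(s)\|_{2}\,|\eta(s)|\,ds\leq t\,\bar{\eta}\,\sup_{0\le s\le t}\|\Phi(s)\|_{2}.
\end{equation*}
Multiplying the two estimates cancels the horizon $t$, leaving the $t$-independent bound $|\hat\theta(t)-\theta|\leq\frac{2\ell}{\vartheta}\,\bar{\eta}\,\sup_{s}\|\Phi(s)\|_{2}$.

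It remains to translate $\sup_{s}\|\Phi(s)\|_{2}$ back into the data. Because every column of $\Phi(s)=[\phi_{1}y_{1}(s)\ \dots\ \phi_{d}y_{1}(s)]$ is the image of the \emph{same} vector $y_{1}(s)$, one has $\Phi(s)v=\big(\sum_{i=1}^{d}v_{i}\phi_{i}\big)y_{1}(s)$, and bounding the induced $2$-norm gives $\|\Phi(s)\|_{2}\leq\big(\max_{i\in[d]}\|\phi_{i}\|_{2}\big)\,|y_{1}(s)|$ up to the matrix-norm constant. Substituting $y_{1}(s)=x(s)+\nu_{1}(s)$ and using Assumption \ref{assu:main} to write $|y_{1}(s)|\leq\|x\|_{\infty}+\max\{\|\underline{\nu}_{1}\|_{\infty},\|\overline{\nu}_{1}\|_{\infty}\}$ then reproduces the claimed expression for $\Delta\theta(\|x\|_{\infty})$.

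The step I expect to be most delicate is the last one: passing from the $p\times d$ matrix $\Phi(s)$ to a scalar bound in $\max_{i}\|\phi_{i}\|_{2}$. A blunt Frobenius- or column-sum estimate inflates the constant by a factor $\sqrt{d}$ (e.g.\ $\phi_{1}=\dots=\phi_{d}=I$ already forces $\|\Phi(s)\|_{2}=\sqrt{d}\,|y_{1}(s)|$), so reaching exactly the stated constant requires exploiting the common factor $y_{1}(s)$ shared by all columns, or equivalently folding the dimensional factor into the convention chosen for the matrix $2$-norm in the statement. Everything else is a routine chain of norm inequalities once the error identity and the persistence-of-excitation lower bound on $M(t)$ are in hand.
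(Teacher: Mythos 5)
Your proposal is correct and follows essentially the same route as the paper's proof: the identical error identity $\hat{\theta}(t)-\theta=\left(\int_{0}^{t}\Phi^{\top}(s)\Phi(s)\,ds\right)^{-1}\int_{0}^{t}\Phi^{\top}(s)\eta(s)\,ds$, the same persistence-of-excitation bound $\left\Vert\left(\int_{0}^{t}\Phi^{\top}(s)\Phi(s)\,ds\right)^{-1}\right\Vert_{2}\leq\frac{2\ell}{\vartheta t}$ for $t\geq\ell$, and the same chain of norm estimates through $\int_{0}^{t}\Vert\Phi(s)\Vert_{2}\,|\eta(s)|\,ds$. Your closing caveat is apt: the paper's own proof ends with ``$\leq$ claimed bound'' and silently performs the step $\Vert\Phi(s)\Vert_{2}\leq\max_{i\in[d]}\Vert\phi_{i}\Vert_{2}\,|y_{1}(s)|$, which in the worst case (e.g.\ $\phi_{1}=\dots=\phi_{d}=I_{p}$) is off by a factor $\sqrt{d}$, so on that point you have been more careful than the source.
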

\begin{proof}
According to \eqref{eq:lin_regr}, the algorithm \eqref{eq:LS} can
be rewritten for $t\geq\ell$ in the form:
\[
\hat{\theta}(t)=\theta+\left(\int_{0}^{t}\Phi^{\top}(s)\Phi(s)ds\right)^{-1}\int_{0}^{t}\Phi^{\top}(s)\eta(s)ds.
\]
Using the fact that $\int_{0}^{t}\Phi^{\top}(s)\Phi(s)ds\geq\frac{\vartheta}{2\ell}tI_{d}$
for $t\geq\ell$, the claim follows the direct computations:
\begin{align*}
|\hat{\theta}(t)-\theta|&\leq\left\Vert \left(\int_{0}^{t}\Phi^{\top}(s)\Phi(s)ds\right)^{-1}\right\Vert _{2}\left|\int_{0}^{t}\Phi^{\top}(s)\eta(s)ds\right|\\
&\leq\frac{2\ell}{\vartheta t}\int_{0}^{t}\Vert\Phi(s)\Vert_{2}|\eta(s)|ds \leq \text{claimed bound}.
\end{align*}
\end{proof}
To guarantee the robust constraint satisfaction we have to take into
account not only an estimate $\hat{\theta}(t)$ of the vector of uncertain
parameters $\theta$, but the set of all admissible values $\hat{\Theta}(t)\subseteq\Theta$
(with $\theta,\hat{\theta}(t)\in\hat{\Theta}(t)$ for all $t\geq0$).
Following the result of Proposition \ref{prop:LS} we can calculate
an estimate for $\hat{\Theta}(t)$:
\begin{equation}
\hat{\Theta}(t)=\Theta\bigcap_{\tau\in[\ell,t]}\{\tilde{\theta}\in\Real^{d}:|\hat{\theta}(\tau)-\tilde{\theta}|\leq\Delta\theta(X)\}.\label{eq:set_LS}
\end{equation}
The property $\hat{\Theta}(t)\subseteq\Theta$
is satisfied for all $t\geq0$, and the size of $\hat{\Theta}(t)$
is shrinking. We also can use some updated estimates on $\Vert x\Vert_{\infty}$
in \eqref{eq:set_LS} instead of the worst case bound $X$.

\section{\label{sec:prediction} State Prediction}

 We aim to derive an interval predictor \cite{Dinh2014,leurent2019interval} for the system \eqref{eq:dynamics}, which takes the information on the observed current state $x(t)\in[y_{1}(t)-\underline{\nu}_{1}(t),y_{1}(t)+\overline{\nu}_{1}(t)]$, the estimated confidence region $\hat{\Theta}(t)$, a planned control signal $u:[t,+\infty)\to\Real^{q}$ and the admissible bounds
on the state perturbation $[\underline{\omega}(t),\overline{\omega}(t)]$; and outputs an interval $[\ux(t),\ox(t)]$ that must verify the inclusion property: 
\begin{equation}
\ux(s)\leq x(s)\leq\ox(s),\quad\forall s\geq t.\label{eq:inclusion-generic}
\end{equation}

There exist many predictors based, \emph{e.g}., on zonotope \cite{le2012}
or interval \cite{Dinh2014,leurent2019interval} representation of the
set of admissible values of $x(s)$.
Opting the simplicity of implementation and computational efficiency,
we use an interval predictor here that ensures the property \eqref{eq:inclusion-generic}.
To this end we will assume that the set $\hat{\Theta}(t)$ computed
by \eqref{eq:set_LS} is given, then there are two possible representations
of uncertainty of $A(\theta)$ in \eqref{eq:dynamics}:
\begin{itemize}
\item interval: for all $\theta\in\hat{\Theta}(t)$ and some $\underline{A},\overline{A}\in\Real^{p\times p}$,
\begin{equation}
\underline{A}\leq A(\theta)\leq\overline{A}\label{eq:interval}.
\end{equation}
\item polytopic: for all $\theta\in\hat{\Theta}(t)$ and some $A_{0}=A(\hat{\theta}(t))$
and $\Delta A_{i}=h_{i}\Delta\theta(X)$ for $h_{i}\in\{-1,1\}^{d}$
with $i\in[2d]$,
\begin{equation}
A(\theta)\in\left\{ A_{0}+\sum_{i=1}^{2^{d}}\alpha_{i}\Delta A_{i}:\alpha_{i}\geq0,\sum_{i=1}^{2^{d}}\alpha_{i}=1\right\} \label{eq:polytope}
\end{equation}
\end{itemize}
The matrices $\underline{A},\overline{A}$ can be calculated using
the interval arithmetic for $\hat{\Theta}(t)$; suggestions
for selection of $A_{0}$ and $\Delta A_{i}$ are
given in \eqref{eq:polytope} (other variants can be used \cite{delos2015}).

\subsection{Design of predictors}

A simple solution providing \eqref{eq:inclusion-generic} is proposed
in \cite{Efimov2012}, where the matrix interval arithmetic is used
to derive the predictor: 
\begin{proposition}
[Simple predictor of \cite{Efimov2012}] Assume that Assumption
\ref{assu:main} and the relations \eqref{eq:interval} are satisfied
for the system \eqref{eq:dynamics}. Then for $s\geq t$ the interval
predictor,
\begin{eqnarray}
\dot{\underline{x}}(s) & = & \underline{A}^{+}\underline{x}^{+}(s)-\overline{A}^{+}\underline{x}^{-}(s)-\underline{A}^{-}\overline{x}^{+}(s)+\overline{A}^{-}\overline{x}^{-}(s)\nonumber \\
 &  & +Bu(s)+D^{+}\underline{\omega}(s)-D^{-}\overline{\omega}(s),\label{eq:predictor-naive}\\
\dot{\overline{x}}(s) & = & \overline{A}^{+}\overline{x}^{+}(s)-\underline{A}^{+}\overline{x}^{-}(s)-\overline{A}^{-}\underline{x}^{+}(s)+\underline{A}^{-}\underline{x}^{-}(s)\nonumber \\
 &  & +Bu(s)+D^{+}\overline{\omega}(s)-D^{-}\underline{\omega}(s),\nonumber \\
 &  & \underline{x}(t)=y_{1}(t)-\underline{\nu}_{1}(t),\;\overline{x}(t)=y_{1}(t)+\overline{\nu}_{1}(t),\nonumber 
\end{eqnarray}
ensures the inclusion property \eqref{eq:inclusion-generic}.
\end{proposition}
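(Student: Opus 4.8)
The plan is to reduce the inclusion \eqref{eq:inclusion-generic} to a forward-invariance statement for the prediction error, using as the main tool the sign-decomposition bound for interval matrix--vector products. First I would establish the interval arithmetic lemma: for any $A,x$ with $\underline{A}\le A\le\overline{A}$ and $\underline{x}\le x\le\overline{x}$,
\begin{align*}
\underline{A}^{+}\underline{x}^{+}-\overline{A}^{+}\underline{x}^{-}-\underline{A}^{-}\overline{x}^{+}+\overline{A}^{-}\overline{x}^{-} &\le Ax \\
&\le \overline{A}^{+}\overline{x}^{+}-\underline{A}^{+}\overline{x}^{-}-\overline{A}^{-}\underline{x}^{+}+\underline{A}^{-}\underline{x}^{-}.
\end{align*}
This is proven componentwise by writing $A=A^{+}-A^{-}$, $x=x^{+}-x^{-}$, expanding $Ax$ into four products of sign-definite factors, and bounding each factor with the matching one-sided inequality; the analogous bound $D^{+}\underline{\omega}-D^{-}\overline{\omega}\le D\omega\le D^{+}\overline{\omega}-D^{-}\underline{\omega}$ follows in the same way from $\underline{\omega}\le\omega\le\overline{\omega}$.

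Next I would introduce the errors $\underline{e}=x-\underline{x}$ and $\overline{e}=\overline{x}-x$. They are nonnegative at $s=t$: the initialization $\underline{x}(t)=y_{1}(t)-\underline{\nu}_{1}(t)$, $\overline{x}(t)=y_{1}(t)+\overline{\nu}_{1}(t)$ brackets $x(t)$ under Assumption~\ref{assu:main}. Since the right-hand sides of \eqref{eq:predictor-naive} depend on $(\underline{x},\overline{x})$ but not on $x$, the triple $(\underline{x},x,\overline{x})$ obeys a well-defined (globally Lipschitz, piecewise-linear) ODE. Differentiating, substituting $\dot{x}=A(\theta)x+Bu+D\omega$ and noting that the $Bu$ terms cancel, the lemma applied with $A=A(\theta)$---which is admissible because \eqref{eq:interval} gives $\underline{A}\le A(\theta)\le\overline{A}$---shows that at every instant at which $\underline{x}(s)\le x(s)\le\overline{x}(s)$ holds one has $\dot{\underline{e}}(s)\ge0$ and $\dot{\overline{e}}(s)\ge0$ componentwise.

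Finally I would promote these boundary inequalities to invariance of the closed set $\mathcal{S}=\{\,\underline{e}\ge0,\ \overline{e}\ge0\,\}$ by a Nagumo subtangentiality argument applied to the full state $(\underline{x},x,\overline{x})$. At any boundary point of $\mathcal{S}$ the full inclusion $\underline{x}\le x\le\overline{x}$ still holds, so for every active constraint---say $\underline{e}_{i}=0$---the previous step yields $\dot{\underline{e}}_{i}\ge0$, i.e.\ the vector field points into $\mathcal{S}$; since $\mathcal{S}$ is closed and convex and the dynamics are Lipschitz (hence solutions are unique), Nagumo's theorem gives $(\underline{x}(s),x(s),\overline{x}(s))\in\mathcal{S}$ for all $s\ge t$, which is exactly \eqref{eq:inclusion-generic}. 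I expect the main obstacle to be making this last step airtight despite the nonsmoothness of $x\mapsto x^{+},x^{-}$: the vector field is only piecewise linear, so rather than differentiating naively I would either invoke the comparison principle for cooperative systems (the bounding maps are monotone---the lower map is nondecreasing in $\underline{x}$ and the upper map nondecreasing in $\overline{x}$, which supplies the required quasimonotonicity) or regularize the inequalities by an $\varepsilon$-perturbation and pass to the limit. The sign bookkeeping in the interval lemma is the other place where care is needed.
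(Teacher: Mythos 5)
Your argument is correct, and there is nothing in the paper to compare it against: the proposition is stated without proof, imported from \cite{Efimov2012}, and your reconstruction is precisely the standard argument of that source --- the four-term sign-decomposition bound $\underline{A}^{+}\underline{x}^{+}-\overline{A}^{+}\underline{x}^{-}-\underline{A}^{-}\overline{x}^{+}+\overline{A}^{-}\overline{x}^{-}\le Ax\le\overline{A}^{+}\overline{x}^{+}-\underline{A}^{+}\overline{x}^{-}-\overline{A}^{-}\underline{x}^{+}+\underline{A}^{-}\underline{x}^{-}$ (your term-by-term verification checks out, as does the analogous bound for $D\omega$), followed by forward invariance of $\{x-\underline{x}\ge0,\ \overline{x}-x\ge0\}$ via subtangency, with the global Lipschitz continuity of $\xi\mapsto\xi^{+},\xi^{-}$ supplying uniqueness; your fallback of an $\varepsilon$-strictification with a first-crossing argument is the usual way this is made airtight, and you are right that this invariance step is where the care is needed. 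One minor remark: your passing claim that the lower map's own-variable monotonicity ``supplies the required quasimonotonicity'' is loose --- the comparison route actually needs the mixed-monotone structure (lower map nondecreasing in $\underline{x}$ \emph{and} nonincreasing in $\overline{x}$, symmetrically for the upper map), which does hold here, but your primary Nagumo route does not depend on it. Also note that the bracketing at $s=t$ relies on the paper's stated reading $x(t)\in[y_{1}(t)-\underline{\nu}_{1}(t),\,y_{1}(t)+\overline{\nu}_{1}(t)]$, whose sign convention is inconsistent with the literal $\underline{\nu}\le\nu\le\overline{\nu}$ of Assumption~\ref{assu:main}; this is a quirk of the paper itself, not a gap in your proof.
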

However, \cite{leurent2019interval} showed that this predictor can
have unstable dynamics, even for stable systems, which causes a fast
explosion of the interval width $\overline{x}(s)-\underline{x}(s)$.
In that work, an enhanced predictor is proposed, which exploits the
polytopic structure \eqref{eq:polytope} to produce tighter and more
stable predictions, at the price of an additional requirement:
\begin{assumption}
\label{assu:metzler} There exists a nonsingular matrix $Z\in\Real^{p\times p}$
such that $Z^{-1}A_{0}Z$ is Metzler\footnote{We say that a matrix is Metzler when all its non-diagonal coefficients
are non-negative.}.
\end{assumption}
In practice, this assumption is often verified. It is for instance
the case whenever $A_{0}$ is diagonalizable, or a method from \cite{Efimov2013}
computes a similarity transformation $Z$ when the system is observable
with respect to a scalar output. To simplify the notation, we
further assume that $Z=I_{p}$. Denote $\Delta A_{+}=\sum_{i=1}^{2^{d}}\Delta A_{i}^{+}$
and $\Delta A_{-}=\sum_{i=1}^{2^{d}}\Delta A_{i}^{-}$.
\begin{proposition}
[Enhanced predictor of \cite{leurent2019interval}] \label{prop:predictor}
Assume that assumptions \ref{assu:main}, \ref{assu:metzler} and
the relation \eqref{eq:polytope} are satisfied for the system \eqref{eq:dynamics}.
Then for $s\geq t$ the interval predictor,
\begin{eqnarray}
\dot{\underline{x}}(s) & = & A_{0}\underline{x}(s)-\Delta A_{+}\underline{x}^{-}(s)-\Delta A_{-}\overline{x}^{+}(s)\nonumber \\
 &  & +Bu(s)+D^{+}\underline{\omega}(s)-D^{-}\overline{\omega}(s),\nonumber \\
\dot{\overline{x}}(s) & = & A_{0}\overline{x}(s)+\Delta A_{+}\overline{x}^{+}(s)+\Delta A_{-}\underline{x}^{-}(s)\label{eq:interval-predictor}\\
 &  & +Bu(s)+D^{+}\overline{\omega}(s)-D^{-}\underline{\omega}(s),\nonumber \\
 &  & \underline{x}(t)=y_{1}(t)-\underline{\nu}_{1}(t),\;\overline{x}(t)=y_{1}(t)+\overline{\nu}_{1}(t),\nonumber 
\end{eqnarray}
ensures the inclusion property \eqref{eq:inclusion-generic}.
\end{proposition}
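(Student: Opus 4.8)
The plan is to establish the inclusion \eqref{eq:inclusion-generic} by showing that the two error signals $\underline{e}(s) := x(s)-\underline{x}(s)$ and $\overline{e}(s) := \overline{x}(s)-x(s)$ remain in the non-negative orthant for all $s\geq t$, i.e. treating non-negativity as a positive-invariance property of a cooperative system driven by a non-negative forcing term. The base case is immediate: the initialization $\underline{x}(t)=y_1(t)-\underline{\nu}_1(t)$, $\overline{x}(t)=y_1(t)+\overline{\nu}_1(t)$ together with the observation inclusion $x(t)\in[y_1(t)-\underline{\nu}_1(t),\,y_1(t)+\overline{\nu}_1(t)]$ give $\underline{e}(t)\geq 0$ and $\overline{e}(t)\geq 0$. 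Note that $\underline{x}\le x\le\overline{x}$ is equivalent to $\underline{e}\ge 0$ and $\overline{e}\ge 0$.

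First I would differentiate the errors. Writing $A(\theta)=A_0+\Delta A$ with $\Delta A=\sum_{i=1}^{2^d}\alpha_i\Delta A_i$, $\alpha_i\ge 0$, $\sum_i\alpha_i=1$ (from \eqref{eq:polytope}) and subtracting \eqref{eq:interval-predictor} from \eqref{eq:dynamics}, the terms $A_0\underline{x}$, $A_0\overline{x}$ combine with $A_0 x$ to produce exactly $A_0\underline{e}$, $A_0\overline{e}$, so that
\begin{align*}
\dot{\underline{e}} &= A_0\underline{e}+\big(\Delta A\,x+\Delta A_+\underline{x}^-+\Delta A_-\overline{x}^+\big)+\big(D\omega-D^+\underline{\omega}+D^-\overline{\omega}\big),\\
\dot{\overline{e}} &= A_0\overline{e}+\big(\Delta A_+\overline{x}^++\Delta A_-\underline{x}^--\Delta A\,x\big)+\big(D^+\overline{\omega}-D^-\underline{\omega}-D\omega\big).
\end{align*}
The disturbance groups are non-negative unconditionally: since $D=D^+-D^-$ with $D^\pm\ge 0$ and $\underline{\omega}\le\omega\le\overline{\omega}$, the first equals $D^+(\omega-\underline{\omega})+D^-(\overline{\omega}-\omega)\ge 0$ and the second equals $D^+(\overline{\omega}-\omega)+D^-(\omega-\underline{\omega})\ge 0$.

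The technical heart is to show the two $\Delta A$ groups are non-negative whenever $\underline{x}\le x\le\overline{x}$; this is where the positive/negative-part bookkeeping and the convexity $\sum_i\alpha_i=1$ enter. Under $\underline{x}\le x\le\overline{x}$, monotonicity of the componentwise parts gives $x^+\le\overline{x}^+$ and $x^-\le\underline{x}^-$. For each vertex, splitting $\Delta A_i x=\Delta A_i^+x^+-\Delta A_i^+x^--\Delta A_i^-x^++\Delta A_i^-x^-$, bounding the two non-negative terms $\Delta A_i^+x^+,\Delta A_i^-x^-$ below by $0$ and using $x^+\le\overline{x}^+$, $x^-\le\underline{x}^-$ on the remaining two, yields $\Delta A_i x\ge-\Delta A_i^+\underline{x}^--\Delta A_i^-\overline{x}^+$ and symmetrically $\Delta A_i x\le\Delta A_i^+\overline{x}^++\Delta A_i^-\underline{x}^-$. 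Averaging with the weights $\alpha_i$ and then using $0\le\alpha_i\le 1$ together with $\underline{x}^-,\overline{x}^+\ge 0$ to pass from $\sum_i\alpha_i\Delta A_i^\pm$ up to the full sums $\Delta A_\pm=\sum_i\Delta A_i^\pm$ gives exactly $-\Delta A_+\underline{x}^--\Delta A_-\overline{x}^+\le\Delta A\,x\le\Delta A_+\overline{x}^++\Delta A_-\underline{x}^-$, which makes both $\Delta A$ groups non-negative.

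It remains to convert these sign conditions into invariance. I would invoke the cooperative-systems argument: the coupled dynamics read $\dot{\underline{e}}=A_0\underline{e}+\rho$, $\dot{\overline{e}}=A_0\overline{e}+\overline{\rho}$ with $A_0$ Metzler (\autoref{assu:metzler}, $Z=I_p$) and $\rho,\overline{\rho}\ge 0$ on the set $\{\underline{e}\ge 0,\ \overline{e}\ge 0\}$. On the boundary of the non-negative orthant, if the $j$-th coordinate of, say, $\underline{e}$ vanishes while all others stay non-negative, then $(A_0\underline{e})_j=\sum_{k\ne j}(A_0)_{jk}\underline{e}_k\ge 0$ because the off-diagonal entries of a Metzler matrix are non-negative, so $\dot{\underline{e}}_j\ge\rho_j\ge 0$ and the vector field points inward. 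By Nagumo's sub-tangentiality condition (equivalently, the standard non-negativity lemma for Metzler systems with non-negative input), the orthant is positively invariant, whence $\underline{e}(s),\overline{e}(s)\ge 0$ for all $s\ge t$, i.e. \eqref{eq:inclusion-generic} holds. The main obstacle I anticipate is making this last step fully rigorous, since $\rho,\overline{\rho}$ are state-dependent and only measurable and bounded (the disturbances lie in $\cL_\infty$); the clean boundary reasoning should therefore be backed by a comparison lemma for cooperative systems rather than a naive first-exit-time argument.
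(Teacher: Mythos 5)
Your proof is correct, but note that this paper contains no proof of \autoref{prop:predictor} to compare against: the proposition is recalled verbatim from \cite{leurent2019interval}, and your argument is in substance a faithful reconstruction of the proof given there --- same error coordinates $\underline{e}=x-\underline{x}$, $\overline{e}=\overline{x}-x$, same splitting $\Delta A_i x=\Delta A_i^+x^+-\Delta A_i^+x^--\Delta A_i^-x^++\Delta A_i^-x^-$ with the bounds $x^+\leq\overline{x}^+$, $x^-\leq\underline{x}^-$, the same relaxation from the convex combination $\sum_i\alpha_i\Delta A_i^\pm$ to the full sums $\Delta A_\pm$ (which is exactly where the predictor's conservatism comes from), and the same conclusion via the Metzler property of $A_0$ (\autoref{assu:metzler}, with $Z=I_p$) ensuring positive invariance of the nonnegative orthant for the error dynamics. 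The one step you rightly flag as delicate --- that the residuals $\rho,\overline{\rho}$ are nonnegative only \emph{on} the set $\{\underline{e}\geq0,\ \overline{e}\geq0\}$, and the inputs are merely measurable --- is the genuine technical point; your proposed fix (a comparison lemma for cooperative systems in place of a naive first-exit argument) is adequate, and an equivalent clean route is to observe that on any maximal interval where the inclusion holds the variation-of-constants formula with $e^{A_0 s}\geq0$ keeps both errors nonnegative, then extend by continuity and an $\epsilon$-perturbation of the residuals. No gap remains once that step is made precise.
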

In Fig. \ref{fig:predictor_example}, the difference in stability
of two predictors \eqref{eq:predictor-naive} and \eqref{eq:interval-predictor}
is illustrated for a simple example. \cite{leurent2019interval} suggest to always prefer
\eqref{eq:interval-predictor} whenever Assumption \ref{assu:metzler}
is verified, and only fallback to \eqref{eq:predictor-naive} as a
last resort. 
\begin{figure}
\begin{centering}
\includegraphics[width=1\linewidth]{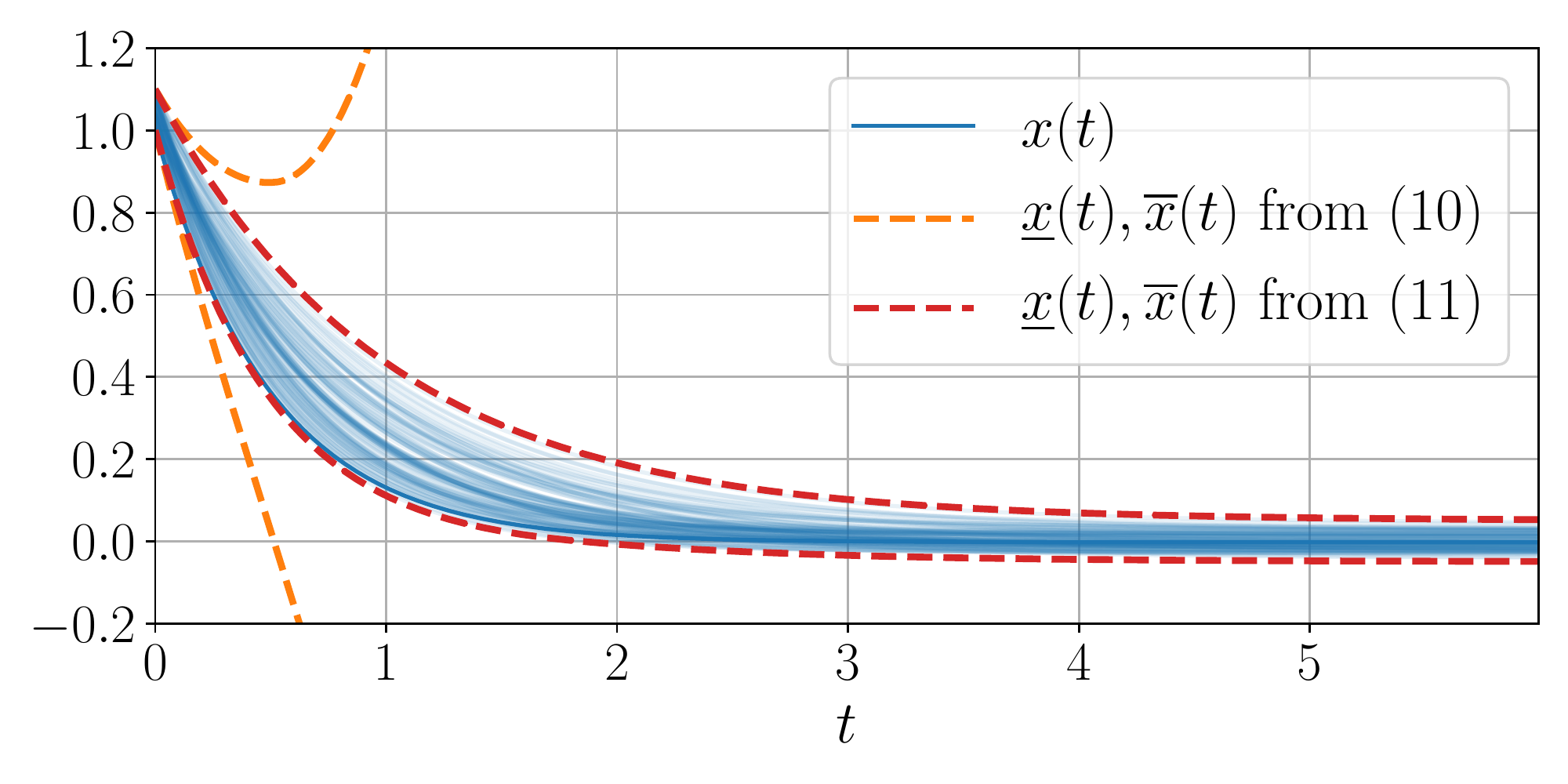}
\par\end{centering}
\caption{\label{fig:predictor_example} Comparison of \eqref{eq:predictor-naive}
and \eqref{eq:interval-predictor} for a simple system $\dot{x}(t)=-\theta x(t)+\omega(t)$,
with $\theta\in[1,2]$ and $\omega(t)\in[-0.05,0.05]$.}
\end{figure}

\subsection{Stabilizing control for \eqref{eq:predictor-naive} and \eqref{eq:interval-predictor}}

Note that both interval predictors, \eqref{eq:predictor-naive} and
\eqref{eq:interval-predictor}, admit a representation in the form (we use the
time argument $t=s$ in this subsection):
\begin{equation}
\dot{\xi}(t)=\cA_{0}\xi(t)+\cA_{1}\xi^{+}(t)+\cA_{2}\xi^{-}(t)+\cB u(t)+\delta(t),\label{eq:predictor}
\end{equation}
where $\xi(t)=[\underline{x}^{\top}(t)\;\overline{x}^{\top}(t)]^{\top}\in\Real^{2p}$
is the extended state vector of the predictors,
\[
\delta(t)=\left[\begin{array}{cc}
D^{+} & -D^{-}\\
-D^{-} & D^{+}
\end{array}\right]\left[\begin{array}{c}
\underline{\omega}(t)\\
\overline{\omega}(t)
\end{array}\right]\in\Real^{2p}
\]
is the external known input, $\cB=[B^{\top}\;B^{\top}]^{\top}$, 
\[
\cA_{0}=0,\;\cA_{1}=\left[\begin{array}{cc}
\underline{A}^{+} & -\underline{A}^{-}\\
-\overline{A}^{-} & \overline{A}^{+}
\end{array}\right],\;\cA_{2}=\left[\begin{array}{cc}
-\overline{A}^{+} & \overline{A}^{-}\\
\underline{A}^{-} & -\underline{A}^{+}
\end{array}\right]
\]
for \eqref{eq:predictor-naive} and
\[
\cA_{0}=\left[\begin{array}{cc}
A_{0} & 0\\
0 & A_{0}
\end{array}\right],\;\cA_{1}=\left[\begin{array}{cc}
0 & -\Delta A_{-}\\
0 & \Delta A_{+}
\end{array}\right],\;\cA_{2}=\left[\begin{array}{cc}
-\Delta A_{+} & 0\\
\Delta A_{-} & 0
\end{array}\right]
\]
for \eqref{eq:interval-predictor}. Note that \eqref{eq:predictor}
is a nonlinear system due to the presence of globally Lipschitz nonlinearities
$\xi^{+}(t)$ and $\xi^{-}(t)$. 

Due to \eqref{eq:inclusion-generic}, the boundedness of $\xi(t)$
implies the same property of $x(t)$. Therefore, in order to regulate
\eqref{eq:dynamics} it is required to design a state feedback $u(t)$
minimizing the asymptotic amplitude of the state $\xi(t)$ for given
input $\delta(t)$ \cite{Efimov2013a}. In other words, it is necessary
to design a control $u(t)$ that input-to-state stabilizes \eqref{eq:predictor}.
It is proposed to look for such a control in the form:
\begin{equation}
u(t)=K_{0}\xi(t)+K_{1}\xi^{+}(t)+K_{2}\xi^{-}(t)+S\delta(t),\label{eq:control_pr}
\end{equation}
where $K_{0},K_{1},K_{2}\in\Real^{q\times2p}$ and $S\in\Real^{q\times2p}$
are the gains to be designed (\eqref{eq:control_pr} contains a nonlinear
feedback). The selection of $S$ is simple, it has to minimize the
norm of $\cB S+I_{2p}$, and it can be made independently of $K_{0},K_{1},K_{2}$.
Therefore, denoting $\tilde{\delta}(t)=(\cB S+I_{2p})\delta(t)$ the
closed-loop system \eqref{eq:predictor}, \eqref{eq:control_pr} takes
the form:
\begin{equation}
\dot{\xi}(t)=\cD_{0}\xi(t)+\cD_{1}\xi^{+}(t)+\cD_{2}\xi^{-}(t)+\tilde{\delta}(t),\label{eq:closed-loop_pr}
\end{equation}
where $\cD_{i}=\cA_{i}+\cB K_{i}$ for $i\in[3]$, and the restrictions,
which the gains $K_{0},K_{1},K_{2}$ have to respect, are given below:
\begin{theorem}
\label{thm:ISS_pr} If there exist diagonal matrices $P$, $Q$, $Q_{+}$,
$Q_{-}$, $Z_{+}$, $Z_{-}$, $\Psi_{+}$, $\Psi_{-}$, $\Psi$, $\Gamma\in\Real^{2p\times2p}$
such that the following linear matrix inequalities are satisfied:
\begin{gather*}
P+\min\{Z_{+},Z_{-}\}>0,\;\Upsilon\preceq0,\;\Gamma>0,\\
Q+\min\{Q_{+},Q_{-}\}+2\min\{\Psi_{+},\Psi_{-}\}>0,\\
\text{where }\quad \Upsilon=\left[\begin{array}{cccc}
\Upsilon_{11} & \Upsilon_{12} & \Upsilon_{13} & P\\
\Upsilon_{12}^{\top} & \Upsilon_{22} & \Upsilon_{23} & Z_{+}\\
\Upsilon_{13}^{\top} & \Upsilon_{23}^{\top} & \Upsilon_{33} & -Z_{-}\\
P & Z_{+} & -Z_{-} & -\Gamma
\end{array}\right],\\
\Upsilon_{11}=\cD_{0}^{\top}P+P\cD_{0}+Q,\;\Upsilon_{12}=\cD_{0}^{\top}Z_{+}+P\cD_{1}+\Psi_{+},\\
\Upsilon_{13}=P\cD_{2}-\cD_{0}^{\top}Z_{-}-\Psi_{-},\;\Upsilon_{22}=Z_{+}\cD_{1}+\cD_{1}^{\top}Z_{+}+Q_{+},\\
\Upsilon_{23}=Z_{+}\cD_{2}-\cD_{1}^{\top}Z_{-}+\Psi,\;\Upsilon_{33}=-Z_{-}\cD_{2}-\cD_{2}^{\top}Z_{-}+Q_{-},
\end{gather*}
then \eqref{eq:closed-loop_pr} is input-to-state stable with respect
to $\underline{\omega},\overline{\omega}$.
\end{theorem}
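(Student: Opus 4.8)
The plan is to exhibit an ISS-Lyapunov function for the closed-loop predictor \eqref{eq:closed-loop_pr} whose dissipation inequality is encoded precisely by the LMI $\Upsilon\preceq0$. Guided by the block structure of $\Upsilon$ and the three quantities $\xi$, $\xi^+$, $\xi^-$ indexing its first three rows, I would take
\[
V(\xi)=\xi^\top P\xi+(\xi^+)^\top Z_+\xi^++(\xi^-)^\top Z_-\xi^-.
\]
The first task is positive definiteness. Because $P,Z_+,Z_-$ are diagonal and $\xi^+,\xi^-$ have disjoint supports (in each coordinate at most one is nonzero), the cross terms vanish and $\xi^\top P\xi=(\xi^+)^\top P\xi^++(\xi^-)^\top P\xi^-$, so that $V(\xi)=(\xi^+)^\top(P+Z_+)\xi^++(\xi^-)^\top(P+Z_-)\xi^-$. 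A coordinatewise bound then gives $c_1|\xi|^2\le V(\xi)\le c_2|\xi|^2$ with $c_1>0$ exactly under the hypothesis $P+\min\{Z_+,Z_-\}>0$, which is why that condition appears.

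Next I would differentiate $V$ along \eqref{eq:closed-loop_pr}. The map $t\mapsto\xi^+(t)$ is locally Lipschitz with $\tfrac{d}{dt}\xi_j^+=\dot\xi_j\,\mathbf{1}\{\xi_j>0\}$ almost everywhere, whence the identities $\tfrac{d}{dt}(\xi^+)^\top Z_+\xi^+=2(\xi^+)^\top Z_+\dot\xi$ and $\tfrac{d}{dt}(\xi^-)^\top Z_-\xi^-=-2(\xi^-)^\top Z_-\dot\xi$. Substituting $\dot\xi=\cD_0\xi+\cD_1\xi^++\cD_2\xi^-+\tilde\delta$ yields
\[
\dot V=2\bigl(\xi^\top P+(\xi^+)^\top Z_+-(\xi^-)^\top Z_-\bigr)\bigl(\cD_0\xi+\cD_1\xi^++\cD_2\xi^-+\tilde\delta\bigr),
\]
and expanding this reproduces, term by term, every block of $\Upsilon$ involving $\cD_0,\cD_1,\cD_2$ together with the disturbance couplings $P$, $Z_+$, $-Z_-$ in the fourth column.

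The remaining blocks $Q,Q_+,Q_-,\Psi_+,\Psi,\Psi_-$ enter as S-procedure multipliers for the structural relations between $\xi,\xi^+,\xi^-$, which may be added freely because they vanish identically. Writing $z=[\xi^\top\ (\xi^+)^\top\ (\xi^-)^\top\ \tilde\delta^\top]^\top$, I would establish
\[
z^\top\Upsilon z=\dot V+W(\xi)-\tilde\delta^\top\Gamma\tilde\delta,
\]
where $W$ gathers the multiplier terms. Using the diagonal-complementarity identity $(\xi^+)^\top\Lambda\xi^-=0$, the splitting $\xi=\xi^+-\xi^-$, and $\xi^\top Q\xi=(\xi^+)^\top Q\xi^++(\xi^-)^\top Q\xi^-$, these collapse to $W(\xi)=(\xi^+)^\top(Q+Q_++2\Psi_+)\xi^++(\xi^-)^\top(Q+Q_-+2\Psi_-)\xi^-$, which a coordinatewise bound makes $\ge\alpha|\xi|^2$ precisely under $Q+\min\{Q_+,Q_-\}+2\min\{\Psi_+,\Psi_-\}>0$. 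Hence $\Upsilon\preceq0$ gives $\dot V\le-\alpha|\xi|^2+\tilde\delta^\top\Gamma\tilde\delta$, and with the quadratic sandwich on $V$ this becomes the ISS estimate $\dot V\le-\tfrac{\alpha}{c_2}V+\lambda_{\max}(\Gamma)|\tilde\delta|^2$; since $\tilde\delta$ is a bounded linear image of $(\underline\omega,\overline\omega)$, input-to-state stability with respect to these signals follows from the standard ISS-Lyapunov theorem.

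The main obstacle I expect is twofold, both parts stemming from the nonlinearity $\xi^\pm$: first, justifying the differentiation of $V$ despite the kinks of the $\max$-functions (handled via the a.e. derivative, or Clarke's generalized gradient, checking the identities at nonsmooth points); and second, more delicately, verifying that the multiplier terms are genuinely identically zero and that the diagonality of \emph{all} matrices is exactly what makes $(\xi^+)^\top\Lambda\xi^-=0$ and the $\xi^\top D\xi$ splitting valid. It is this diagonal structure that linearizes the otherwise intractable coupling and reduces the dissipation requirement to the single LMI $\Upsilon\preceq0$.
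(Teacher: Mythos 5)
Your proof is correct and takes essentially the same route as the paper: your Lyapunov function $V(\xi)=\xi^\top P\xi+(\xi^+)^\top Z_+\xi^++(\xi^-)^\top Z_-\xi^-$ coincides coordinatewise (for diagonal $Z_\pm$, since $\xi_k\xi_k^+=(\xi_k^+)^2$ and $-\xi_k\xi_k^-=(\xi_k^-)^2$) with the paper's $\xi^\top P\xi+\xi^\top Z_+\xi^+-\xi^\top Z_-\xi^-$, and your expansion of $\dot V$, the S-procedure multipliers absorbed into $\Upsilon$, the collapse of the multiplier terms via the disjoint supports of $\xi^+,\xi^-$, and the final ISS-Lyapunov conclusion mirror the paper's argument step for step. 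The only (harmless) difference is that you track the multiplier terms as exact identities, e.g.\ $(\xi^+)^\top\Psi_+\xi=(\xi^+)^\top\Psi_+\xi^+$, rather than the paper's one-sided inequalities, which in fact dispenses with the paper's implicit sign assumptions $\Psi_+\geq0$, $\Psi_-\geq0$; also note your smoothness worry is moot since $x\mapsto(x^+)^2$ is $C^1$.
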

Note that the requirement that $P$ has to be diagonal is not
restrictive, since for a Metzler matrix $\cD_{0}$ (the case of \eqref{eq:predictor-naive}
and \eqref{eq:interval-predictor}), its stability is equivalent to
existence of a diagonal solution $P$ of the Lyapunov equation $\cD_{0}^{\top}P+P\cD_{0}\prec0$
\cite{Positive}.
\begin{proof}
Consider a candidate Lyapunov function:
\begin{gather*}
V(\xi)=\xi^{\top}P\xi+\xi{}^{\top}Z_{+}\xi^{+}-\xi^{\top}Z_{-}\xi^{-}\\
=\sum_{k=1}^{2p}P_{k,k}\xi_{k}^{2}+(Z_{+})_{k,k}|\xi_{k}|\xi_{k}^{+}+(Z_{-})_{k,k}|\xi_{k}|\xi_{k}^{-},
\end{gather*}
which is positive definite provided that
$
P+\min\{Z_{+},Z_{-}\}>0
$
since all terms in $V$ are quadratic-like, and whose derivative for
the system \eqref{eq:closed-loop_pr} dynamics takes the form:
\begin{gather*}
\dot{V}=2\dot{\xi}^{\top}P\xi+2\dot{\xi}^{\top}Z_{+}\xi^{+}-2\dot{\xi}^{\top}Z_{-}\xi^{-}\\
=\left[\begin{array}{c}
\xi\\
\xi^{+}\\
\xi^{-}\\
\tilde{\delta}
\end{array}\right]^{\top}\Upsilon\left[\begin{array}{c}
\xi\\
\xi^{+}\\
\xi^{-}\\
\tilde{\delta}
\end{array}\right]-\xi^{\top}Q\xi-(\xi^{+})^{\top}Q_{+}\xi^{+}\\
-(\xi^{-})^{\top}Q_{-}\xi^{-}-2(\xi^{+})^{\top}\Psi\xi^{-}-2(\xi^{+})^{\top}\Psi_{+}\xi\\
-2(-\xi^{-})^{\top}\Psi_{-}\xi+\tilde{\delta}^{\top}\Gamma\tilde{\delta}.
\end{gather*}
Note that
$
(\xi^{+})^{\top}\Psi\xi^{-}=0,\;(\xi^{+})^{\top}\Psi_{+}\xi\geq0,\;(-\xi^{-})^{\top}\Psi_{-}\xi\geq0
$
for any diagonal matrix $\Psi$ and $\Psi_{+}\geq0$, $\Psi_{-}\geq0$.
Hence, if $\Upsilon\preceq0$, as it is assumed in the theorem, we
obtain that
\begin{eqnarray*}
\dot{V} & \leq & -\xi^{\top}Q\xi-(\xi^{+})^{\top}Q_{+}\xi^{+}-(\xi^{-})^{\top}Q_{-}\xi^{-}\\
 &  & -2(\xi^{+})^{\top}\Psi_{+}\xi-2(-\xi^{-})^{\top}\Psi_{-}\xi+\tilde{\delta}^{\top}\Gamma\tilde{\delta}\\
 & \leq & -\xi^{\top}\Omega\xi+\tilde{\delta}^{\top}\Gamma\tilde{\delta},
\end{eqnarray*}
where $\Omega=Q+\min\{Q_{+},Q_{-}\}+2\min\{\Psi_{+},\Psi_{-}\}>0
$
is a diagonal matrix. The substantiated properties of $V$ and its
derivative imply that \eqref{eq:closed-loop_pr} is input-to-state
stable \cite{Sontag:01:Springer,Dashkovskiy:11:AiT} with respect
to the input $\tilde{\delta}$ (or, by its definition,
to $(\underline{\omega},\overline{\omega})$).
\end{proof}
Following the proof of Theorem \ref{thm:ISS_pr}, for all $\xi\in\Real^{2p}$,
\[
\xi^{\top}(P+\min\{Z_{+},Z_{-}\})\xi\leq V(\xi)\leq\xi^{\top}(P+Z_{+}^{+}+Z_{-}^{+})\xi,
\]
then
$
\dot{V}\leq-\alpha V+\tilde{\delta}^{\top}\Gamma\tilde{\delta}
$
for all $\xi,\tilde{\delta}\in\Real^{2p}$, where
$
\alpha=\min_{i\in[2p]}\lambda_{i}\left(\Omega(P+Z_{+}^{+}+Z_{-}^{+})^{-1}\right),
$
and we can define the set (recall that the signal $\tilde{\delta}(t)$
is known for all $t\geq0$)
\begin{equation}
\X_{f}=\{\xi\in\Real^{2p}:V(\xi)\leq\alpha^{-1}\sup_{t\geq0}|\tilde{\delta}^{\top}(t)\Gamma\tilde{\delta}(t)|\},\label{eq:X_f}
\end{equation}
as the set that asymptotically attracts all trajectories in \eqref{eq:closed-loop_pr}.

The conditions of Theorem \ref{thm:ISS_pr} assume that the control
gains $K_{0},K_{1},K_{2}$ are given, let us find these gains as solutions
of linear matrix inequalities:
\begin{corollary}
If there exist diagonal matrices $P$, $\tilde{Q}$, $\tilde{Q}_{+}$,
$\tilde{Q}_{-}$, $Z_{+}$, $Z_{-}$, $\tilde{\Psi}_{+}$, $\tilde{\Psi}_{-}$,
$\tilde{\Psi}$, $\Gamma\in\Real^{2p\times2p}$ and matrices $U_{0},U_{1},U_{2}\in\Real^{q\times2p}$
satisfying following linear matrix inequalities:
\begin{gather*}
P>0,\;Z_{+}>0,\;Z_{-}>0,\;\Pi\preceq0,\;\Gamma>0,\\
\tilde{Q}+\min\{\tilde{Q}_{+},\tilde{Q}_{-}\}+2\min\{\tilde{\Psi}_{+},\tilde{\Psi}_{-}\}>0,\\
\text{where }\quad \Pi=\left[\begin{array}{cccc}
\Pi_{11} & \Pi_{12} & \Pi_{13} & I\\
\Pi_{12}^{\top} & \Pi_{22} & \Pi_{23} & I\\
\Pi_{13}^{\top} & \Pi_{23}^{\top} & \Pi_{33} & -I\\
I & I & -I & -\Gamma
\end{array}\right],\\
\Pi_{11}=P^{-1}\cA_{0}^{\top}+\cA_{0}P^{-1}+U_{0}^{\top}\cB^{\top}+\cB U_{0}+\tilde{Q},\\
\Pi_{12}=\cA_{1}Z_{+}^{-1}+\cB U_{1}+P^{-1}\cA_{0}^{\top}+U_{0}^{\top}\cB^{\top}+\tilde{\Psi}_{+},\\
\Pi_{13}=\cA_{2}Z_{-}^{-1}+\cB U_{2}-P^{-1}\cA_{0}^{\top}-U_{0}^{\top}\cB^{\top}-\tilde{\Psi}_{-},\\
\Pi_{22}=Z_{+}^{-1}\cA_{1}^{\top}+\cA_{1}Z_{+}^{-1}+U_{1}^{\top}\cB^{\top}+\cB U_{1}+\tilde{Q}_{+},\\
\Pi_{23}=\cA_{2}Z_{-}^{-1}+\cB U_{2}-Z_{+}^{-1}\cA_{1}^{\top}-U_{1}^{\top}\cB^{\top}+\tilde{\Psi},\\
\Pi_{33}=\tilde{Q}_{-}-Z_{-}^{-1}\cA_{2}^{\top}-\cA_{2}Z_{-}^{-1}-U_{2}^{\top}\cB^{\top}-\cB U_{2},
\end{gather*}
then \eqref{eq:closed-loop_pr} for $K_{0}=U_{0}P$, $K_{1}=U_{1}Z_{+}$
and $K_{2}=U_{2}Z_{-}$ is input-to-state stable with respect to the
inputs $\underline{\omega},\overline{\omega}$.
\end{corollary}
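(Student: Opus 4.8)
The statement is the controller-synthesis counterpart of \Cref{thm:ISS_pr}: the analysis conditions there are bilinear in the Lyapunov matrices $(P,Z_{+},Z_{-})$ and the gains $(K_{0},K_{1},K_{2})$ through $\cD_{i}=\cA_{i}+\cB K_{i}$, and the plan is to convexify them by the standard linearizing change of variables. Concretely, I would pre- and post-multiply the block matrix $\Upsilon$ of \Cref{thm:ISS_pr} by the symmetric, nonsingular congruence factor $T=\operatorname{diag}(P^{-1},Z_{+}^{-1},Z_{-}^{-1},I_{2p})$, which is nonsingular precisely because the Corollary imposes $P>0$, $Z_{+}>0$, $Z_{-}>0$. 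Simultaneously I introduce the new unknowns $U_{0}=K_{0}P^{-1}$, $U_{1}=K_{1}Z_{+}^{-1}$, $U_{2}=K_{2}Z_{-}^{-1}$ together with the transformed slacks $\tilde{Q}=P^{-1}QP^{-1}$, $\tilde{Q}_{\pm}=Z_{\pm}^{-1}Q_{\pm}Z_{\pm}^{-1}$, $\tilde{\Psi}_{\pm}=P^{-1}\Psi_{\pm}Z_{\pm}^{-1}$, and $\tilde{\Psi}=Z_{+}^{-1}\Psi Z_{-}^{-1}$. Since every matrix involved is diagonal, these substitutions keep all variables diagonal and commute freely, which is what makes the bookkeeping tractable.

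The bulk of the work is a block-by-block verification that $T^{\top}\Upsilon T=\Pi$. The mechanism is uniform: the factor $P$, $Z_{+}$ or $Z_{-}$ sitting in front of some $\cD_{i}$ is absorbed on one side by the matching factor of $T$, turning $P\cD_{1}$ into $\cD_{1}Z_{+}^{-1}=\cA_{1}Z_{+}^{-1}+\cB U_{1}$ and turning $\cD_{0}^{\top}P$ into $P^{-1}\cD_{0}^{\top}=P^{-1}\cA_{0}^{\top}+U_{0}^{\top}\cB^{\top}$ (using $K_{0}=U_{0}P$ and $P$ diagonal); the quadratic slacks become their tilde versions, the last row/column $P,Z_{+},-Z_{-}$ collapses to $I,I,-I$, and the bottom-right $-\Gamma$ is left untouched. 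Checking the $(1,1)$, $(1,2)$, $(2,2)$ and $(2,3)$ entries already exhibits every pattern, and the remaining entries are analogous. Because congruence preserves the inertia of a symmetric matrix, this identity yields $\Pi\preceq0\iff\Upsilon\preceq0$ under the stated identifications, and reading $K_{i}$ back off $U_{i}$ recovers the gains claimed.

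It then remains to transfer the side conditions that are not part of the inequality $\Upsilon\preceq0$. The requirement $\Gamma>0$ is common to both formulations, and $P>0$, $Z_{+}>0$, $Z_{-}>0$ immediately give the positive-definiteness $P+\min\{Z_{+},Z_{-}\}>0$ of the Lyapunov function needed in \Cref{thm:ISS_pr}. With $\Upsilon\preceq0$ and these two facts, the closed loop is ISS as soon as the bound $\dot V\le-\xi^{\top}\Omega\xi+\tilde{\delta}^{\top}\Gamma\tilde{\delta}$ from the proof of \Cref{thm:ISS_pr} is genuinely negative definite, i.e. as soon as $\Omega=Q+\min\{Q_{+},Q_{-}\}+2\min\{\Psi_{+},\Psi_{-}\}>0$ for the recovered $Q,Q_{\pm},\Psi_{\pm}$.

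This last point is where I expect the real difficulty, and it is the main obstacle. Unlike $\Upsilon\preceq0$, the condition on $\Omega$ is \emph{not} the image of a single congruence: once the originals are recovered as $Q=P\tilde{Q}P$, $Q_{\pm}=Z_{\pm}\tilde{Q}_{\pm}Z_{\pm}$ and $\Psi_{\pm}=P\tilde{\Psi}_{\pm}Z_{\pm}$, its three contributions carry the heterogeneous scalings $P^{2}$, $Z_{\pm}^{2}$ and $PZ_{\pm}$, so the Corollary's unscaled condition $\tilde{Q}+\min\{\tilde{Q}_{+},\tilde{Q}_{-}\}+2\min\{\tilde{\Psi}_{+},\tilde{\Psi}_{-}\}>0$ does not translate term-by-term into $\Omega>0$. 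Rather than invoke \Cref{thm:ISS_pr} verbatim for this part, I would therefore re-derive the bound on $\dot V$ directly in the synthesized parametrization, working coordinatewise since all matrices are diagonal, to pin down the exact positivity margin the change of variables actually produces and to check how it relates to the stated condition. Reconciling this heterogeneously scaled positivity requirement with a genuine negative-definite Lyapunov derivative is, I expect, the only nonroutine step; everything else is the mechanical congruence computation.
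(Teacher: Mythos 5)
Your congruence computation is, step for step, the paper's own proof: the paper writes
\[
\Upsilon=\operatorname{diag}(P,Z_{+},Z_{-},I_{2p})\,\Pi\,\operatorname{diag}(P,Z_{+},Z_{-},I_{2p})
\]
under exactly your substitutions $U_{0}=K_{0}P^{-1}$, $U_{1}=K_{1}Z_{+}^{-1}$, $U_{2}=K_{2}Z_{-}^{-1}$, $\tilde{Q}=P^{-1}QP^{-1}$, $\tilde{Q}_{\pm}=Z_{\pm}^{-1}Q_{\pm}Z_{\pm}^{-1}$, $\tilde{\Psi}_{\pm}=P^{-1}\Psi_{\pm}Z_{\pm}^{-1}$ and $\tilde{\Psi}=Z_{-}^{-1}\Psi Z_{+}^{-1}$ (your $Z_{+}^{-1}\Psi Z_{-}^{-1}$ is the same by diagonality), concludes $\Pi\preceq0\Rightarrow\Upsilon\preceq0$, notes that $P>0$, $Z_{+}>0$, $Z_{-}>0$ give $P+\min\{Z_{+},Z_{-}\}>0$, and then invokes \Cref{thm:ISS_pr}. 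So for everything you call mechanical, you and the paper coincide. The divergence is precisely the step you single out as nonroutine: the paper dispatches the transfer of the positivity condition in one sentence, asserting that $\tilde{Q}+\min\{\tilde{Q}_{+},\tilde{Q}_{-}\}+2\min\{\tilde{\Psi}_{+},\tilde{\Psi}_{-}\}>0$ and $Q+\min\{Q_{+},Q_{-}\}+2\min\{\Psi_{+},\Psi_{-}\}>0$ ``are equivalent due to the diagonal structure of all matrices,'' with no further argument.

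Your skepticism about that sentence is mathematically well founded. Working coordinatewise with $p,z_{+},z_{-}>0$, the recovered condition reads
\[
\frac{q}{p^{2}}+\min\Bigl\{\frac{q_{+}}{z_{+}^{2}},\frac{q_{-}}{z_{-}^{2}}\Bigr\}+2\min\Bigl\{\frac{\psi_{+}}{pz_{+}},\frac{\psi_{-}}{pz_{-}}\Bigr\}>0,
\]
and for $p\neq z_{\pm}$ this neither implies nor is implied by $q+\min\{q_{+},q_{-}\}+2\min\{\psi_{+},\psi_{-}\}>0$: for instance $p=1$, $q=-1/2$, $z_{+}=q_{+}=1/10$, $z_{-}=q_{-}=1$, $\psi_{\pm}=0$ gives $-1/2+\min\{10,1\}>0$ in tilde variables but $-1/2+\min\{1/10,1\}<0$ in the originals. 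The equivalence is a genuine termwise identity only when the scalings agree per coordinate (e.g., $P=Z_{+}=Z_{-}$, where both sides rescale by the same positive diagonal). Note also that your fallback plan--re-deriving $\dot{V}$ directly--hits the same obstruction: the derivative bound needs, coordinatewise, $q+q_{\pm}+2\psi_{\pm}=p^{2}\tilde{q}+z_{\pm}^{2}\tilde{q}_{\pm}+2pz_{\pm}\tilde{\psi}_{\pm}>0$, which the corollary's unscaled condition again does not guarantee; closing it requires either an added normalization such as $P=Z_{+}=Z_{-}$ or restating the positivity condition in scaled form (at the cost of bilinearity). To be fair, the scalar example above only refutes the paper's asserted equivalence of the side conditions, not necessarily the corollary's conclusion, since one would still have to embed it in a feasible instance of $\Pi\preceq0$. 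In sum: your proposal is the paper's proof on the routine part, and the single step you leave open is the one the paper closes only by an assertion that fails termwise--your diagnosis of where the difficulty lies is correct, and your attempt is in that respect no less complete than the published argument.
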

\begin{proof}
Note that the conditions $P>0$, $Z_{+}>0$, $Z_{-}>0$ imply $P+\min\{Z_{+},Z_{-}\}>0$,
and 
\[
\Upsilon=\left[\begin{array}{cccc}
P & 0 & 0 & 0\\
0 & Z_{+} & 0 & 0\\
0 & 0 & Z_{-} & 0\\
0 & 0 & 0 & I_{2p}
\end{array}\right]\Pi\left[\begin{array}{cccc}
P & 0 & 0 & 0\\
0 & Z_{+} & 0 & 0\\
0 & 0 & Z_{-} & 0\\
0 & 0 & 0 & I_{2p}
\end{array}\right]
\]
under substitution $U_{0}=K_{0}P^{-1}$, $U_{1}=K_{1}Z_{+}^{-1}$,
$U_{2}=K_{2}Z_{-}^{-1}$, $\tilde{Q}=P^{-1}QP^{-1}$, $\tilde{Q}_{+}=Z_{+}^{-1}Q_{+}Z_{+}^{-1}$,
$\tilde{Q}_{-}=Z_{-}^{-1}Q_{-}Z_{-}^{-1}$, $\tilde{\Psi}=Z_{-}^{-1}\Psi Z_{+}^{-1}$,
$\tilde{\Psi}_{+}=P^{-1}\Psi_{+}Z_{+}^{-1}$ and $\tilde{\Psi}_{-}=P^{-1}\Psi_{-}Z_{-}^{-1}$.
Hence, $\Upsilon\preceq0$ provided that $\Pi\preceq0$. The inequalities
$\tilde{Q}+\min\{\tilde{Q}_{+},\tilde{Q}_{-}\}+2\min\{\tilde{\Psi}_{+},\tilde{\Psi}_{-}\}>0$
and $Q+\min\{Q_{+},Q_{-}\}+2\min\{\Psi_{+},\Psi_{-}\}>0$ are equivalent
due to the diagonal structure of all matrices. Therefore, under introduced
restrictions all conditions of Theorem \ref{thm:ISS_pr} are verified
for $K_{0}=U_{0}P$, $K_{1}=U_{1}Z_{+}$ and $K_{2}=U_{2}Z_{-}$.
\end{proof}
The requirements imposed on $P,Z_{+},Z_{-}$ in this corollary are
more restrictive than the conditions of Theorem \ref{thm:ISS_pr},
but it allows the gains $K_{0},K_{1},K_{2}$ to be efficiently calculated.

Under conditions of Theorem \ref{thm:ISS_pr}, the control \eqref{eq:control_pr}
ensures stabilization of the predictor \eqref{eq:predictor} in
a vicinity $\X_{f}$ of the origin whose size is proportional
to the system \eqref{eq:dynamics} uncertainty (it can be optimized
by the choice of $K_{0},K_{1},K_{2}$). Due to \eqref{eq:inclusion-generic},
the system \eqref{eq:dynamics} also will reach a
neighborhood of the origin and the posed control problem would be solved provided that \eqref{eq:constraints}
holds. In order to ensure the robust constraint satisfaction we
consider an MPC design in the next section.

\section{\label{sec:control} Robust Control}

For brevity, the results of this section are given for the predictor
\eqref{eq:interval-predictor} only. We need the following hypothesis:
\begin{assumption}
\label{assu:ctrl} There exist $K_{0},K_{1},K_{2}\in\Real^{q\times2p}$
satisfying the conditions of Theorem \ref{thm:ISS_pr} for the matrices
$A_{0}$ and $\Delta A_{i}$ with $i\in[2d]$ calculated in \eqref{eq:polytope}
for $\hat{\Theta}(t)=\Theta$, and
$
\X_{f}\subset\X^{2},
$
where the corresponding set $\X_{f}$ is given in \eqref{eq:X_f},
and
\[
K_{0}\xi+K_{1}\xi^{+}+K_{2}\xi^{-}+S\delta(t)\in\U
\]
for any $\xi\in\X_{f}$ and $t\geq0$.
\end{assumption}
These properties guarantee that there exists a control \eqref{eq:control_pr}
that can be always applied to stabilize the predictor \eqref{eq:interval-predictor} and into the set $\X_{f}$ the restrictions \eqref{eq:constraints}
also hold for such a control. Define $T>0$ and $\tau\in(0,T)$ as the interval of prediction and
the application time for MPC. 
Denote $t_{i}=i\tau$ for $i\in\Natural_{+}$, then the developed MPC algorithm can be formalized as:
\begin{enumerate}
\item Take $\hat{\Theta}(t_{i})$ from \eqref{eq:set_LS} and calculate
the matrices $A_{0}$ and $\Delta A_{i}$ with $i\in[2d]$ for \eqref{eq:polytope}.
\item Find, given weights $W_{i}\succeq0$ in $\Real^{2p\times2p}$, the controls
\begin{gather}
\cU=\text{argmin}_{u:[t_{i},t_{i}+T]\to\Real^{q}}\xi^{\top}(t_{i}+T)W_{1}\xi(t_{i}+T)\nonumber\\
+\int_{t_{i}}^{t_{i}+T}\xi^{\top}(s)W_{2}\xi(s)+u^{\top}(s)W_{3}u(s)ds,\label{eq:OCP}
\end{gather}
such that the following constraints are satisfied: 
\begin{enumerate}
\item $\xi:[t_{i},t_{i}+T]\to\Real^{2p}$ is a solution of \eqref{eq:interval-predictor}
\item $\xi(s)\in\X^{2}$ and $u(s)\in\U$ for $s\in[t_{i},t_{i}+T]$; 
\item $\xi(t_{i}+T)\in\X_{f}$.
\end{enumerate}
\item For $t\in[t_{i},t_{i}+\tau)$ select
\begin{equation}
u(t)=\begin{cases}
\cU(t) & \xi(t_{i})\notin\X_{f}\\
\eqref{eq:control_pr} & \xi(t_{i})\in\X_{f}
\end{cases},\label{eq:control}
\end{equation}
where $K_{0},K_{1},K_{2}$ are taken from Assumption \ref{assu:ctrl}.
\end{enumerate}
As we can conclude, the idea of the proposed dual MPC scheme (see
also \cite{Michalska1993,MPC1,MPC:Tube2}) is to use an open-loop
optimal control to reach a neighborhood of the origin $\X_{f}$ ensuring
a robust constraint satisfaction \eqref{eq:constraints}, where a
closed-loop control \eqref{eq:control_pr} can be applied, which provides
asymptotic performances (stability and robustness, also with constraint
satisfaction due to Assumption \ref{assu:ctrl} and the definition
of the terminal set \eqref{eq:X_f}). 


The main result of the paper is as follows:
\begin{theorem}
\label{th:MPC} Let $\underline{x}_{0},\overline{x}_{0}\in\X$, and
assumptions \ref{assu:main}--\ref{assu:ctrl} hold with $\overline{\omega},\overline{\omega}-\underline{\omega}$
being non-increasing functions of $t\geq0$. Then the closed-loop
system given by \eqref{eq:dynamics}, \eqref{eq:outputs}, \eqref{eq:interval-predictor}
and \eqref{eq:control} has the following properties:
\begin{enumerate}
\item Input-to-state stability for $\underline{x},\;\overline{x}$ and practical
input-to-state stability for $x$ with respect to $\underline{\omega},\overline{\omega}$
in the terminal set $\X_{f}$; 
\item Recursive feasibility with reaching $\X_{f}$ in a finite time; 
\item Constraint satisfaction.
\end{enumerate}
\end{theorem}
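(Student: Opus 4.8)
The plan is to prove the three claims in the order dictated by the dual-mode structure of the algorithm: first the closed-loop behaviour inside the terminal set, then recursive feasibility of the open-loop phase, and finally constraint satisfaction along the whole trajectory. The starting point is to establish that $\X_f$ is forward invariant for the closed-loop predictor dynamics \eqref{eq:closed-loop_pr}. The inequality $\dot V\le-\alpha V+\tilde\delta^\top\Gamma\tilde\delta$ derived right after Theorem \ref{thm:ISS_pr}, together with the definition \eqref{eq:X_f} of $\X_f$ through the \emph{global} supremum of $|\tilde\delta^\top\Gamma\tilde\delta|$, gives $\dot V\le0$ on the boundary of $\X_f$, hence invariance. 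On $\X_f$ the feedback \eqref{eq:control_pr} is active, so Theorem \ref{thm:ISS_pr} directly yields input-to-state stability of $(\ux,\ox)$ with respect to $(\underline\omega,\overline\omega)$; the assumption that $\overline\omega$ and $\overline\omega-\underline\omega$ are non-increasing pins down a well-defined asymptotic disturbance level and thus a clean ISS gain. Practical ISS for $x$ then follows from the inclusion \eqref{eq:inclusion-generic}: since $x$ is sandwiched in $[\ux,\ox]$, the residual offset is governed by the asymptotic interval width, which is precisely the practical term. Together with Assumption \ref{assu:ctrl} (which gives $\X_f\subset\X^2$ and the feedback taking values in $\U$ on $\X_f$), this settles claim (1) and the constraint part of claim (3) during the closed-loop phase.

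Next I would prove recursive feasibility by the standard shift-and-append construction. Assuming \eqref{eq:OCP} is feasible at $t_i$ with a minimizer whose predicted tube stays in $\X^2$ and satisfies $\xi(t_i+T)\in\X_f$, I would build a candidate at $t_{i+1}=t_i+\tau$ that reuses the tail of that control on $[t_{i+1},t_i+T]$ and appends the terminal feedback \eqref{eq:control_pr} on $[t_i+T,t_{i+1}+T]$. The appended piece keeps $\xi$ in $\X_f$ by the invariance just established, and the tail inherits $\X^2$-feasibility from the previous solution, \emph{provided} the freshly measured interval used to re-initialise \eqref{eq:interval-predictor} at $t_{i+1}$ is contained in the previously certified interval at $t_{i+1}$. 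Establishing this containment is the crux, and I expect it to be the \textbf{main obstacle}: unlike in classical deterministic MPC, the predictor is restarted from new measurements at every step, so the shift-append candidate is not automatically admissible. Resolving it requires a comparison/monotonicity property of \eqref{eq:interval-predictor} (order-preservation with respect to nested initial intervals, inherited from the cooperative Metzler structure of $A_0$ under Assumption \ref{assu:metzler}), combined with the non-increasing disturbance bounds and the shrinking confidence region $\hat\Theta(t)\subseteq\hat\Theta(t_i)$, so that the re-initialised tube does not exceed the one already validated.

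Finally, with recursive feasibility in hand, I would obtain finite-time reaching of $\X_f$ and close claim (3). Using the optimal value of \eqref{eq:OCP} as a Lyapunov-type function together with the candidate above, the value decreases from step to step; since the running cost $\xi^\top W_2\xi+u^\top W_3u$ is bounded away from zero while $\xi\notin\X_f$, this decrease is uniformly strict outside $\X_f$, forcing entry into $\X_f$ after finitely many applications of the horizon $\tau$. Constraint satisfaction then holds globally: during the open-loop phase the constraints of \eqref{eq:OCP} enforce $\xi(s)\in\X^2$ and $u(s)\in\U$, and the inclusion \eqref{eq:inclusion-generic} upgrades $\xi\in\X^2$ to $x\in\X$, while during the closed-loop phase Assumption \ref{assu:ctrl} guarantees the same; concatenating the two phases yields \eqref{eq:constraints} for all $t\ge0$.
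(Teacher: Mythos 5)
Your proof is correct and follows essentially the same dual-mode argument as the paper: input-to-state stability inside the forward-invariant terminal set $\X_{f}$ via Theorem \ref{thm:ISS_pr} (with practical ISS for $x$ obtained through \eqref{eq:inclusion-generic}), recursive feasibility from tail sub-optimality enabled by the shrinking $\hat{\Theta}(t)$ and the non-increasing disturbance bounds, finite-time entry into $\X_{f}$ by a cost-decrease argument, and constraint satisfaction by concatenating the two phases. The containment you flag as the main obstacle is precisely the nesting $[\ux(t_{i}),\ox(t_{i})]\subset[\ux(t_{i-1}+\tau),\ox(t_{i-1}+\tau)]$ that the paper's proof simply asserts without detailed justification, so your explicit appeal to the order-preservation of \eqref{eq:interval-predictor} under Assumption \ref{assu:metzler} is, if anything, more careful than the original.
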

\begin{proof}
Recall that $\theta\in\hat{\Theta}(t)$ for all $t\geq0$ due to the
result of Proposition \ref{prop:LS}, and the size of the set $\hat{\Theta}(t)$
is not growing with time by definition of \eqref{eq:set_LS}.

\emph{1)} Note that if for some $t_{k}\geq0$ the initial conditions $(\underline{x}^{\top}(t_{k}),\overline{x}^{\top}(t_{k}))^{\top}\in\X_{f}\subset\X^{2}$,
then the control \eqref{eq:control} equals to \eqref{eq:control_pr}.
According to the definition \eqref{eq:X_f} of $\X_{f}$ and Assumption
\ref{assu:ctrl}, $\xi(t)\in\X_{f}$ and $u(t)\in\U$ for all $t\geq t_{k}$,
and the system is input-to-state stable with respect to $\xi(t)=[\underline{x}^{\top}(t)\;\overline{x}^{\top}(t)]^{\top}$
due to the result of Theorem \ref{thm:ISS_pr}. Since $|x(t)|\leq|\xi(t)|$
under \eqref{eq:inclusion-generic} for $t\geq t_{k}$ and $|\xi(t_{k})|\leq|x(t_{k})|+\zeta$
with $\zeta>0$ 
, the practical input-to-state stability for the variable

\emph{2)} Now, let $(\underline{x}^{\top}(0),\overline{x}^{\top}(0))^{\top}\in\X^{2}\setminus\X_{f}$
and assume that there is a solution of the optimal control problem
\eqref{eq:OCP}. Applying such a control through \eqref{eq:control}
for $t\in[0,\tau)$, we have that $\xi(t)\in\X$ and $u(t)\in\U$
on this time interval. At $t=t_{1}=\tau$, if again $(\underline{x}^{\top}(t_{1}),\overline{x}^{\top}(t_{1}))^{\top}\in\X^{2}\setminus\X_{f}$
, then it recursively exists a solution to
\eqref{eq:OCP} since the set $\hat{\Theta}(t)$ is shrinking by its
design and the signals $\overline{\omega}(t),\overline{\omega}(t)-\underline{\omega}(t)$
are non-increasing by hypotheses of the theorem (\emph{i.e}., the
solution obtained at $t_{i}$ is a sub-optimal branch of the solution
calculated at $t_{i-1}$ for all $i\geq1$). Thus, recursive feasibility
follows. Note that $\X_{f}$ is a neighborhood of the origin, and
the given in \eqref{eq:OCP} cost with positive definite matrices
$W_{1}$, $W_{2}$ and $W_{3}$ is minimized inside $\X_{f}$. Using
this and sub-optimality arguments, since $\xi(t_{i}+T)\in\X_{f}$
in \eqref{eq:OCP} (provided that the optimal control $\cU$ is applied)
and $[\underline{x}(t_{i}),\overline{x}(t_{i})]\subset[\underline{x}(t_{i-1}+\tau),\overline{x}(t_{i-1}+\tau)]$
for all $i\geq1$, there is a finite time instant $t_{k}\geq T$ such
that $(\underline{x}^{\top}(t_{k}),\overline{x}^{\top}(t_{k}))^{\top}\in\X_{f}$,
and the system further stays there.

\emph{3)} is a consequence of the previous analysis: under
the control \eqref{eq:control} the constrains \eqref{eq:constraints}
are always satisfied.
\end{proof}

\section{\label{sec:experiments} Numerical experiment}

We tackle the problem of the robust adaptive lateral control of an autonomous vehicle with unknown tire friction, for a lane-keeping application. We represent the state of a rigid vehicle by its position $(p_x, p_y)$, angle $\psi$, velocity $(v_x, v_y)$ in the body frame and yaw rate $r$. We denote its mass as $m$, moment of inertia as $I_z$, and front and rear axle positions as $a,b$. We consider the Dynamical Bicycle Model described in Chapter 3.2 of \cite{awan2014compensation} parametrised by the unknown front and rear tire friction coefficients $\theta = \begin{bmatrix} C_{\alpha_f} & C_{\alpha_r}\end{bmatrix}^\transp$, which yields the linear dynamics \eqref{eq:dynamics} with 
\[
x = \begin{bmatrix} {p_y} \\ {\psi} \\ {v_y} \\ {r} \end{bmatrix},\quad
A = \begin{bmatrix}
0 & v_x & 1 & 0 \\
0 & 0 & 0 & 1 \\
0 & 0 & 0 & - v_x \\
0 & 0 & 0 & 0
\end{bmatrix},\quad
B =
\begin{bmatrix}
0 \\
0 \\
\frac{2}{m} \\
\frac{a}{I_z}
\end{bmatrix},
\]
\[
\phi = \frac{-2}{m v_x I_z}\left[\begin{bmatrix}
0 & 0 & 0 & 0 \\
0 & 0 & 0 & 0 \\
0 & 0 & I_z & a I_z \\
0 & 0 & a m & a^2 m \\
\end{bmatrix},\begin{bmatrix}
0 & 0 & 0 & 0 \\
0 & 0 & 0 & 0 \\
0 & 0 & I_z & -b I_z \\
0 & 0 & - bm & b^2 m \\
\end{bmatrix}\right].
\]
Instead of simply stabilizing the vehicle state $x$, we track the lateral position $y_r(t)$ of the lane center. However, we do not have access to a full state reference $x_r(t) = [y_r(t), \psi_r(t), v_{y,r}(t), r_r(t)]^\transp$ consistent with the dynamics \eqref{eq:dynamics}. Thus, we define the state $\tilde{x} = x - [y_r(t), 0, 0, 0]^\transp$ and consider the remaining unknown terms $[0, \psi_r(t), v_{y,r}(t), r_r(t)]$ and $u_r(t)$ as perturbations $\omega(t)$, bounded since $x_r,\,u_r$ are assumed to belong to $\X = \pm[3, 2, 6, 6]^\transp$ and $\U=\pm[10]$.

The \Cref{fig:lane-keeping} depicts our approach. The confidence region $\hat{\Theta}(t)$ from \eqref{eq:set_LS} is shown in the top graph, and shrinks with time. To simplify verification of \Cref{assu:metzler} for this example, an auxiliary preliminary feedback has been applied shifting the eigenvalues of the closed-loop system. The robust stability of this feedback is assessed with the LMI of \Cref{thm:ISS_pr}, and we compute the corresponding basin of attraction $X_f$ from \eqref{eq:X_f}, represented in green in the bottom subfigure. Then, we use a sampling-based MPC scheme \cite{HomemDeMello2014} to solve \eqref{eq:LS} and bring $\xi(t)$ into $X_f$ in $T=3s$.  The associated interval prediction $\xi(t)$ from \eqref{eq:predictor} is represented with a color gradient from $t=t_i$ (red) to $t=t_i+T$ (green). Once the vehicles enters $X_f$, we finally switch to the closed-loop feedback \eqref{eq:control_pr} following \eqref{eq:control} for the rest of the simulation. A video is available at \href{https://youtu.be/axurBzHRLGY}{this url}.
\begin{figure}
    \centering
    \includegraphics[width=\linewidth]{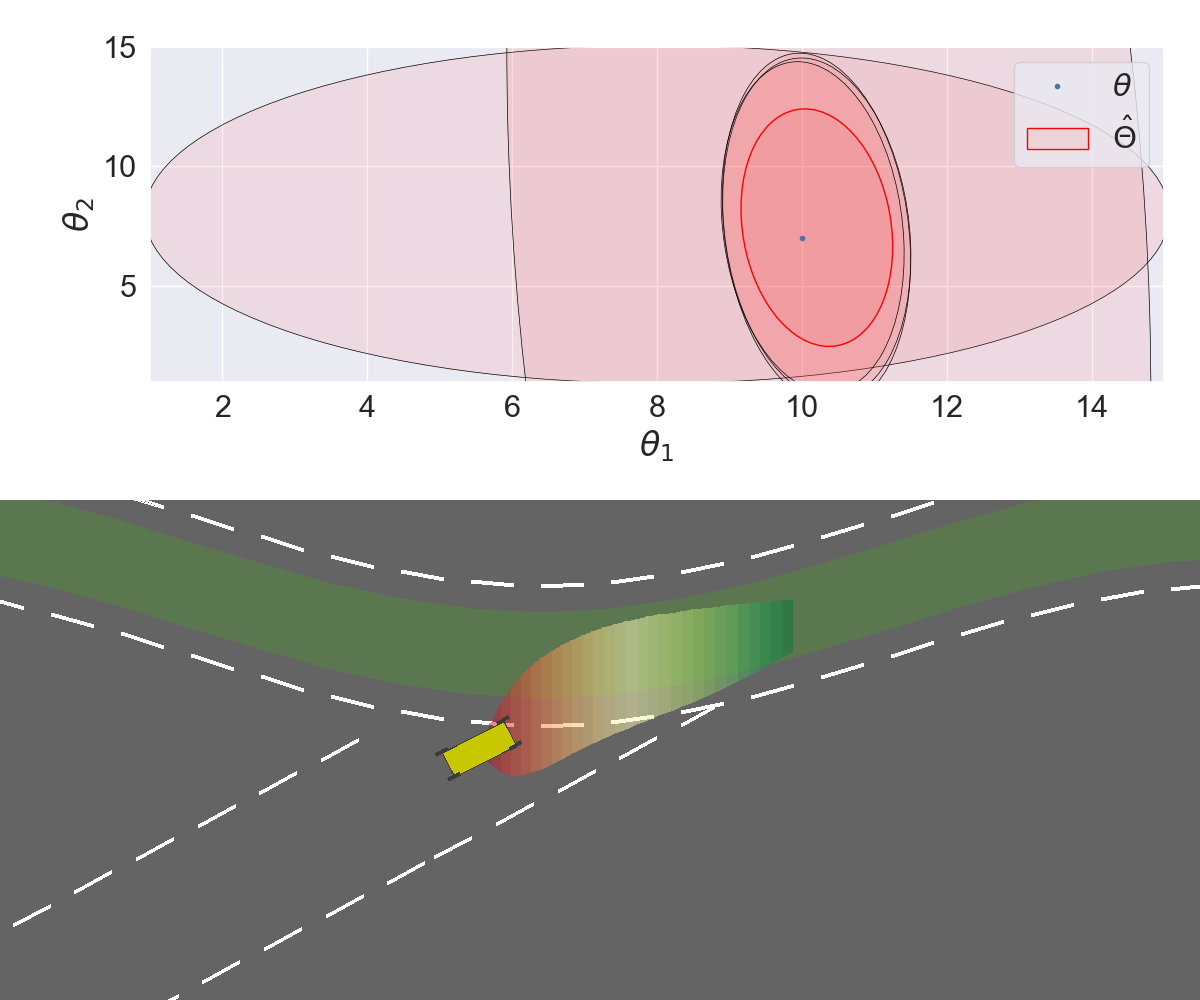}
    \caption{\textbf{Top}: the model estimation showing the confidence region $\hat{\Theta}(t)$ from \eqref{eq:predictor} at different times $t$. \textbf{Bottom}: a lane keeping application, where a car must follow a lane-center curve under unknown friction and perturbations. $X_f$ is shown in green, and $\xi(t)$ as an area with a color gradient.}
    \label{fig:lane-keeping}
\end{figure}

\section*{Conclusion}

A robust adaptive MPC algorithm is presented for a partially known
linear system subject to disturbances. The peculiarity of the proposed
solution consists in utilization of interval predictor. The applicability
of the method is demonstrated on a simulated car application.

\bibliographystyle{IEEEtran}
\bibliography{references}

\end{document}